\newtheorem{DE}{Definition}[section]
\newcommand {\sm} {\setminus}
\newcommand{\qed}{\relax\ifmmode\hskip2em\Box\else\unskip\nobreak\hfill$\Box$\fi}
\newtheorem{theorem}[DE]{Theorem}
\newtheorem{lemma}[DE]{Lemma}
\theoremstyle{break}\theorembodyfont{\rmfamily}}
\theoremstyle{break}\theorembodyfont{\rmfamily}}
 \newcounter{claim}
\newenvironment{claim}[1][]%
{\refstepcounter{claim}\vspace{1ex}\noindent{(\it\arabic{claim}){#1}{}}\it}{\vspace{1ex}}
\newenvironment{proofclaim}[1][]%
	{\noindent {}{#1}{}}{ This proves~(\arabic{claim}).\vspace{1ex}}
\newenvironment{proof}[1][]%
{\noindent {\setcounter{claim}{0}\em Proof. 
   }{#1}{}}{\hfill$\Box$\vspace{2ex}} 
\begin{document}

\title{Detecting wheels}

\author{Emilie Diot,  S\'ebastien Tavenas and Nicolas
  Trotignon\thanks{The affiliation of the authors and the grants
    supporting them are all given at the end of the paper.}}

\maketitle

\begin{abstract}
  A \emph{wheel} is a graph made of a cycle of length at least~4
  together with a vertex that has at least three neighbors in the
  cycle.  We prove that the problem whose instance is a graph $G$ and whose
  question is ``does $G$ contains a wheel as an induced subgraph'' is
  NP-complete.    We also settle the complexity of several similar
  problems. 
\end{abstract}

\section{Introduction}

In this article, all graphs are finite and simple.  If $G$ and $H$
are graphs, we say that $G$ \emph{contains} $H$ when $H$ is isomorphic
to an induced subgraph of $H$. 

A \emph{prism} is a graph made of three vertex-disjoint paths $P_1 =
a_1 \dots b_1$, $P_2 = a_2 \dots b_2$, $P_3 = a_3 \dots b_3$ of length
at least 1, such that $a_1a_2a_3$ and $b_1b_2b_3$ are triangles and no
edges exist between the paths except these of the two triangles.
 
A \emph{pyramid} is a graph made of three paths $P_1 = a \dots b_1$,
$P_2 = a \dots b_2$, $P_3 = a \dots b_3$ of length at least~1, two of
which have length at least 2, vertex-disjoint except at $a$, and such
that $b_1b_2b_3$ is a triangle and no edges exist between the paths
except these of the triangle and the three edges incident to $a$.

A \emph{theta} is a graph made of three internally vertex-disjoint
paths $P_1 = a \dots b$, $P_2 = a \dots b$, $P_3 = a \dots b$ of
length at least~2 and such that no edges exist between the paths
except the three edges incident to $a$ and the three edges incident to
$b$.

\begin{figure}
  \begin{center}
    \includegraphics[height=2cm]{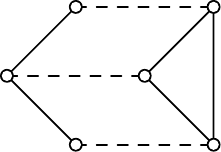}
    \hspace{.2em}
    \includegraphics[height=2cm]{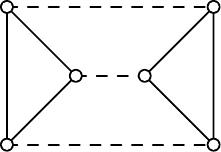}
    \hspace{.2em}
    \includegraphics[height=2cm]{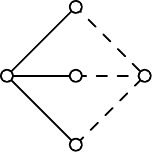}
    \hspace{.2em}
    \includegraphics[height=2cm]{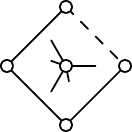}
  \end{center}
  \caption{Pyramid, prism, theta and wheel (dashed lines represent
    paths)\label{f:tc}}
\end{figure}

A \emph{hole} in a graph is a chordless cycle of length at least~4.
Observe that the lengths of the paths in the three definitions above
are designed so that the union of any two of the paths form a hole.  A
\emph{wheel} is a graph formed by a hole $H$ (called the \emph{rim})
together with a vertex (called the \emph{center}) that has at least
three neighbors in the hole.

A \emph{Truemper configuration} is a graph isomorphic to a prism, a
pyramid, a theta or a wheel (they were first considered by
Truemper~\cite{truemper}).  Truemper configurations play an important
role in the proof of several decomposition theorems as explained in a
very complete survey of Vu\v skovi\'c~\cite{vuskovic:truemper}.  Let
us explain how with the example of perfect graphs.

The \emph{chromatic number} of a graph $G$, denoted by $\chi(G)$, is
the minimum number of colours needed to assign a colour to each vertex
of $G$ in such a way that adjacent vertices receive different colours.
The \emph{clique number} of $G$, denoted by $\omega(G)$ is the maximum
number of pairwise adjacent vertices in $G$.  Every graph $G$ clearly
satisfies $\chi(G) \geq \omega(G)$, because the vertices of a clique
must receive different colours.  A graph $G$ is \emph{perfect} if
every induced subgraph $H$ of $G$ satisfies $\chi(H) = \omega(H)$.  A
chordless cycle of length $2k+1$, $k\geq 2$, satisfies $3 = \chi >
\omega = 2$, and its complement satisfies $k+1 = \chi > \omega = k$.
An \emph{antihole} is an induced subgraph $H$ of $G$, such that
$\overline{H}$ is hole of $\overline{G}$.  A hole (resp.\ an antihole)
is \emph{odd} or \emph{even} according to the number of its vertices
(that is equal to the number of its edges).  A graph is \emph{Berge}
if it does not contain an odd hole nor an odd antihole.  The
following, known as the \emph{strong perfect graph theorem}
(\emph{SPGT} for short), was conjectured by Berge~\cite{berge:61} in
the 1960s and was the object of much research until it was finally
proved in 2002 by Chudnovsky, Robertson, Seymour and
Thomas~\cite{chudnovsky.r.s.t:spgt} (since then, a shorter proof was
discovered by Chudnovsky and Seymour~\cite{chudnovsky.seymour:even}).

\begin{theorem}[{Chudnovsky, Robertson, Seymour and Thomas 2002}]
  \label{th:spgt}
  A graph is perfect if and only if it is Berge.
\end{theorem}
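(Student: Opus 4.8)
The plan is to prove the two implications separately, and the forward direction is the trivial one: I would argue the contrapositive. As recalled above, an odd hole $C_{2k+1}$ with $k\geq 2$ satisfies $\chi = 3 > 2 = \omega$ and an odd antihole on $2k+1$ vertices satisfies $\chi = k+1 > k = \omega$, so neither is perfect; since the class of perfect graphs is closed under taking induced subgraphs, any graph containing an odd hole or an odd antihole is imperfect. Hence a perfect graph is Berge, and all the work lies in the converse.

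For the converse I would argue by contradiction, considering a \emph{minimally imperfect} Berge graph $G$: that is, $G$ is Berge and imperfect but every proper induced subgraph of $G$ is perfect. The first step is to collect the classical structural facts about minimally imperfect graphs (Lov\'asz, Padberg, Chv\'atal): $\overline{G}$ is again minimally imperfect, $G$ has exactly $\alpha(G)\,\omega(G)+1$ vertices, $G$ is both $\omega$-vertex-critical and $\alpha$-vertex-critical, and, crucially, $G$ has no \emph{star cutset}. These facts reduce the problem to a purely structural one.

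The heart of the proof is then a \emph{decomposition theorem}: every Berge graph is either \emph{basic} --- a bipartite graph, the line graph of a bipartite graph, the complement of one of these, or a \emph{double split graph} --- or it admits one of a short list of structural decompositions (a balanced skew partition, a proper $2$-join, or a proper $2$-join in the complement; the a priori further candidate of a homogeneous pair can be eliminated). Proving this is by far the main obstacle, and it is exactly where Truemper configurations enter: one analyses $G$ according to which of a pyramid, a prism, a long wheel-like configuration, or a suitable hole it contains, and in each case either exhibits a decomposition or forces $G$ into a basic class. This case analysis is long and delicate --- it is the bulk of~\cite{chudnovsky.r.s.t:spgt} --- and the single piece that resisted longest was handling balanced skew partitions, which needs a separate hard argument showing that a minimally imperfect graph cannot admit one.

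To finish, I would assemble the pieces. Each basic class is easily checked to be perfect: bipartite graphs and their complements by K\"onig's theorem, line graphs of bipartite graphs and their complements by the corresponding edge-colouring statements, and double split graphs directly. On the other side, a proper $2$-join (and, by complementation, its complement) cannot occur in a minimally imperfect graph (Cornu\'ejols--Cunningham), and neither can a balanced skew partition. Applying the decomposition theorem to our minimally imperfect Berge graph $G$ therefore yields a contradiction in every case: $G$ is not basic (basic graphs are perfect) and admits none of the listed decompositions. Hence no minimally imperfect Berge graph exists, every Berge graph is perfect, and together with the easy direction this establishes Theorem~\ref{th:spgt}.
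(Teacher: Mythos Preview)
Your outline is a faithful high-level sketch of the Chudnovsky--Robertson--Seymour--Thomas proof, and it is consistent with the brief description the paper gives immediately after stating Theorem~\ref{th:spgt}. Note, however, that the paper does \emph{not} supply its own proof of this theorem: it is quoted as background, with the easy direction dispatched in one line and the converse summarised as ``very long'' and resting on a decomposition theorem ``not worth stating here''. So there is no paper proof to compare against beyond that one-paragraph gloss; your write-up simply expands on it, and does so correctly.
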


One direction is easy: every perfect graph is Berge, since as we
observed already odd holes and antiholes satisfy $\chi = \omega + 1$.
The proof of the converse is very long and relies on structural graph
theory.  The main step is a \emph{decomposition theorem}, not worth
stating here, asserting that every Berge graph is either in a
well-understood \emph{basic} class of perfect graphs, or has some
\emph{decomposition}.

Let us now explain why Truemper configurations play a role in the
proof.  First, a Berge graph has no pyramid (because among the three
paths of a pyramid, two have the same parity, and their union forms an
odd hole).  This little fact is used very often to provide a
contradiction when describing the structure of a Berge graph.  A long
part of the proof of the SPGT is devoted to study the structure of a
Berge graph that contains a prism, and another long part is devoted to
a Berge graph that contains a wheel.  And at the very end of the
proof, it is proved that graphs not previously decomposed are
bipartite, just as Berge thetas are.  Note that prisms can be defined
as line graphs of thetas.  This use of Truemper configurations is
seemingly something deep and general as suggested by the survey of
Vu\v skovi\'c~\cite{vuskovic:truemper} about Truemper configurations
and how they are used (sometimes implicitly) in many decomposition
theorems.

\vspace{2ex}

Testing whether a graph contains or not  Truemper
configurations is therefore a question of interest.  In what follows,
$n$ stands for the number of vertices, and $m$ for the number of edges
of the input graph.  Detecting a pyramid in an input graph can be
done in time $O(n^9)$ (see Chudnovsky, Cornu\'ejols, Liu, Seymour and
Vu\v skovi\'c~\cite{chudnovsky.c.l.s.v:reco}) and a theta in time
$O(n^{11})$ (see Chudnovsky and
Seymour~\cite{chudnovsky.seymour:theta}).  Detecting a prism is
NP-complete (see Maffray and Trotignon~\cite{maffray.t:reco}).
Detecting a prism or a pyramid can be done in time $O(n^5)$ (see
Maffray and Trotignon~\cite{maffray.t:reco}).  Detecting a theta or a
pyramid can be done in time $O(n^7)$ (see Maffray, Trotignon and Vu\v
skovi\'c~\cite{maffray.t.v:3pcsquare}).  Detecting a prism or a theta
can be done in time $O(n^{35})$ (see Chudnovsky and
Kapadia~\cite{Chudnovsky.Ka:08}).

The complexity of detecting a wheel was not known so far.  We prove
here that it is NP-complete, even when restricted to bipartite (and
therefore perfect) graphs.  Our proof relies on a variant of a
classical construction of Bienstock~\cite{bienstock:evenpair} (that is
the basis of all the hardness results in the field, but how to use it
for wheels has not been discovered so far).  An easy consequence is
that detecting a wheel or a prism is NP-complete (because bipartite
graphs contain no prisms, so for them detecting a wheel or a prism is
equivalent to detecting a wheel). By the same argument, detecting a
wheel or a pyramid is NP-complete.  Also detecting a wheel, a pyramid
or a prism is NP-complete.

\newcommand{\zero}{yes} \newcommand{\un}{---}

In Table~\ref{t:t}, we survey the complexity of detecting any
combination of Truemper configuration.  The structure to be detected
is indicated with ``\zero''.  For instance line~5 of the table should
be read as follows: the complexity of deciding whether a graph
contains a theta or a prism is $O(n^{35})$.  Observe that being able
to detect a theta \emph{or} a prism is equivalent to a recognition
algorithm of the classes of graphs that do not contain thetas
\emph{and} prisms as induced subgraphs.  Line~0 of the table follows
from a result of Conforti, Cornu{\'e}jols, Kapoor and Vu{\v s}kovi{\'
  c}~\cite{confortiCKV97}.  They call \emph{universally signable
  graphs} the graphs that contain no Truemper configuration, and give
a decomposition theorem for them.  The complexity of recognizing
universally signable graphs is obtained with an algorithm of
Tarjan~\cite{tarjan:clique} that gives the decomposition tree of any
graph with clique cutsets.  For lines with a question mark, the
complexity is not known.  The complexities claimed in lines 8, 10, 12
and 14 of the table are proved in this article.

\begin{table}
  \begin{tabular}{ccccccc}
    k & theta & pyramid & prism & wheel & Complexity & Reference\\ \hline
    0   &  \zero  & \zero & \zero & \zero & $O(nm)$&\cite{confortiCKV97}\cite{tarjan:clique}\\
    1   &  \zero  & \zero & \zero & \un & $O(n^7)$& \cite{maffray.t:reco}\cite{maffray.t.v:3pcsquare} \\
    2   &  \zero  & \zero & \un & \zero & ? & \\
    3   &  \zero  & \zero & \un & \un & $O(n^7)$&\cite{maffray.t.v:3pcsquare}\\
    4   &  \zero  & \un & \zero & \zero & ?& \\
    5   &  \zero  & \un & \zero & \un & $O(n^{35})$&\cite{Chudnovsky.Ka:08} \\
    6   &  \zero  & \un & \un & \zero & ?& \\
    7   &  \zero  & \un & \un & \un & $O(n^{11})$&\cite{chudnovsky.seymour:theta}\\
    8   &  \un  & \zero & \zero & \zero & NPC& \\
    9   &  \un  & \zero & \zero & \un & $O(n^5)$&\cite{maffray.t:reco} \\
    10   &  \un  & \zero & \un & \zero & NPC& \\
    11   &  \un  & \zero & \un & \un & $O(n^9)$&\cite{chudnovsky.c.l.s.v:reco} \\
    12   &  \un  & \un & \zero & \zero & NPC&\\
    13   &  \un  & \un & \zero & \un & NPC&\cite{maffray.t:reco} \\
    14   &  \un  & \un & \un & \zero & NPC&\\
    15   &  \un  & \un & \un & \un & $O(1)$&\\
  \end{tabular}
  \caption{Detecting Truemper configurations\label{t:t}}
\end{table}

In Section~\ref{sec:base} we give the basic reduction from 3-SAT that
is used for all our hardness results.  In Section~\ref{sec:wheels}, we
adress the question of detecting a wheel, and several variants
motivated by perfect graphs, such as detecting a wheel in a graph or
its complements, and detecting variants of wheels (with different sets
of constraints on the length of the rim, and the numbers of neighbors
of the center).  Some variants are polynomial, and
some are NP-complete.

\section{The main construction}
\label{sec:base}

In this section, we give a variant of a classical construction due to
Bienstock.  Let $f$ be an instance of $3$-SAT, consisting of $m$
clauses $C_1, \ldots, C_m$ on $n$ variables $x_1, \ldots, x_n$.  Let
us build a graph $G_f$ with two specialized vertices $a,b$, such that
there will be an induced cycle containing both $a,b$ in $G$ if and
only if there exists a truth assignment for $f$.  For later use, some
edges of $G_f$ will be labelled ``black'' and some will be labelled
``red''.  Black edges should be thought of as ``edges that can be
subdivided'', or as ``edges that potentially belong to the hole''.
Red edges should be thought of as ``edges that serve as chords'' and
as ``non-subdivisible edges''.

For each variable $x_i$ ($i=1, \ldots, n$), make a graph $G(x_i)$ with
$8m+8$ vertices $a_i, b_i, a'_i, b'_i, t_{i, 0}, \dots, t_{i, 2m},
f_{i, 0}, \dots, f_{i, 2m}, t'_{i, 0}, \dots, t'_{i, 2m}, f'_{i, 0},
\dots, f'_{i, 2m},$.  Add black edges in such a way that $a_i t_{i, 0}
\dots t_{i, 2m} b_i$, $a_i f_{i, 0} \dots f_{i, 2m} b_i$, $a'_i t'_{i,
  0} \dots t'_{i, 2m} b'_i$ and $a'_i f'_{i, 0} \dots f'_{i, 2m} b'_i$
are chordless paths.  Add the following red edges: $t_{i, 2j} f_{i,
  2j}$, $f_{i, 2j} t'_{i, 2j}$, $t'_{i, 2j} f'_{2j}$ and $f'_{i, 2j}
t_{i, 2j}$ for $j=0, \dots, m$.  See Figure~\ref{fig:gxi}.

For each clause $C_j$ ($j=1, \ldots, m$), with $C_j=u_j^1\vee
u_j^2\vee u_j^3$, where each $u_j^p$ ($p=1, 2, 3$) is a literal from
$\{x_1, \ldots, x_n, \overline{x}_1, \ldots, \overline{x}_n\}$, make a
graph $G(C_j)$ with five vertices $c_j, d_j, v_j^1, v_j^2, v_j^3$ and
six black edges so that each of $c_j, d_j$ is adjacent to each of
$v_j^1, v_j^2, v_j^3$.  See Figure~\ref{fig:gcj}.  For $p=1, 2, 3$, if
$u_j^p=x_i$ then add two red edges $v_j^pf_{i, 2j-1}, v_j^pf'_{i,
  2j-1}$, while if $u_j^p=\overline{x}_i$ then add two red edges
$v_j^pt_{i, 2j-1}, v_j^pt'_{i, 2j-1}$.  See Figure~\ref{fig:gf}.

The graph $G_f$ is obtained from the disjoint union of the $G(x_i)$'s
and the $G(C_j)$'s as follows.  For $i=1, \ldots, n-1$, add edges
$b_ia_{i+1}$ and $b'_ia'_{i+1}$.  Add an edge $b'_nc_1$.  For $j=1,
\ldots, m-1$, add an edge $d_jc_{j+1}$.  Introduce the two specialized
vertices $a,b$ and add edges $aa_1, aa'_1$ and $bd_m, bb_n$.  See
Figure~\ref{fig:gf2}.  Clearly the size of $G_f$ is polynomial
(actually quadratic) in the size $n+m$ of $f$.  An \emph{$f$-graph} is
any graph obtained from $G_f$ by subdividing black edges of $G_f$ (the
subdivision is arbitrary: each edge is subdivided an arbitrary number
of times, possibly zero).

\begin{figure}
  \begin{center}
    \includegraphics{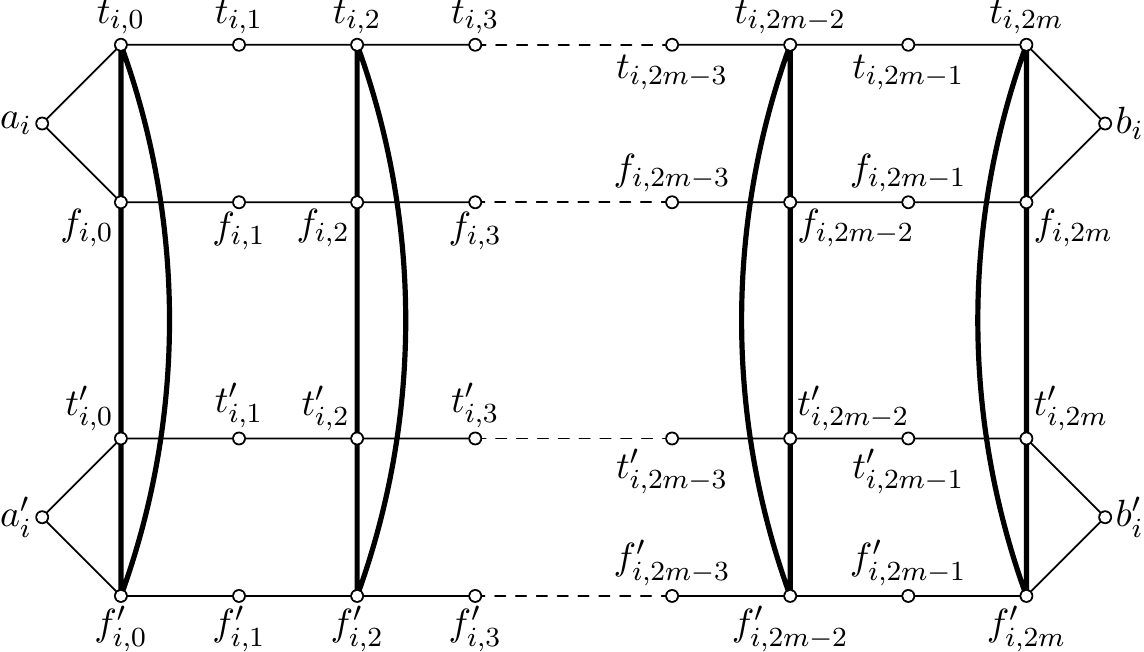}
    \caption{The graph $G(x_i)$\label{fig:gxi}}
  \end{center}
\end{figure}

\begin{figure}
  \begin{center}
    \includegraphics{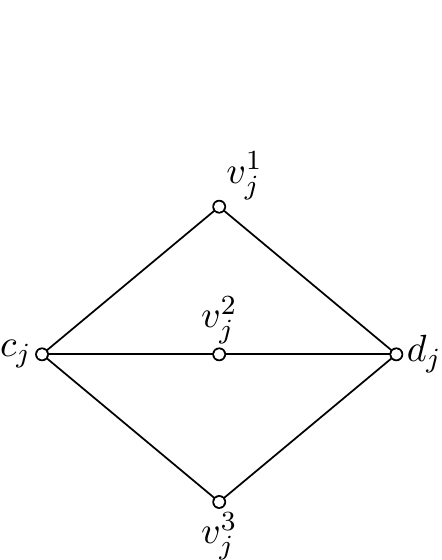}
    \caption{The graph $G(C_j)$\label{fig:gcj}}
  \end{center}
\end{figure}

\begin{figure}
  \begin{center}
    \includegraphics{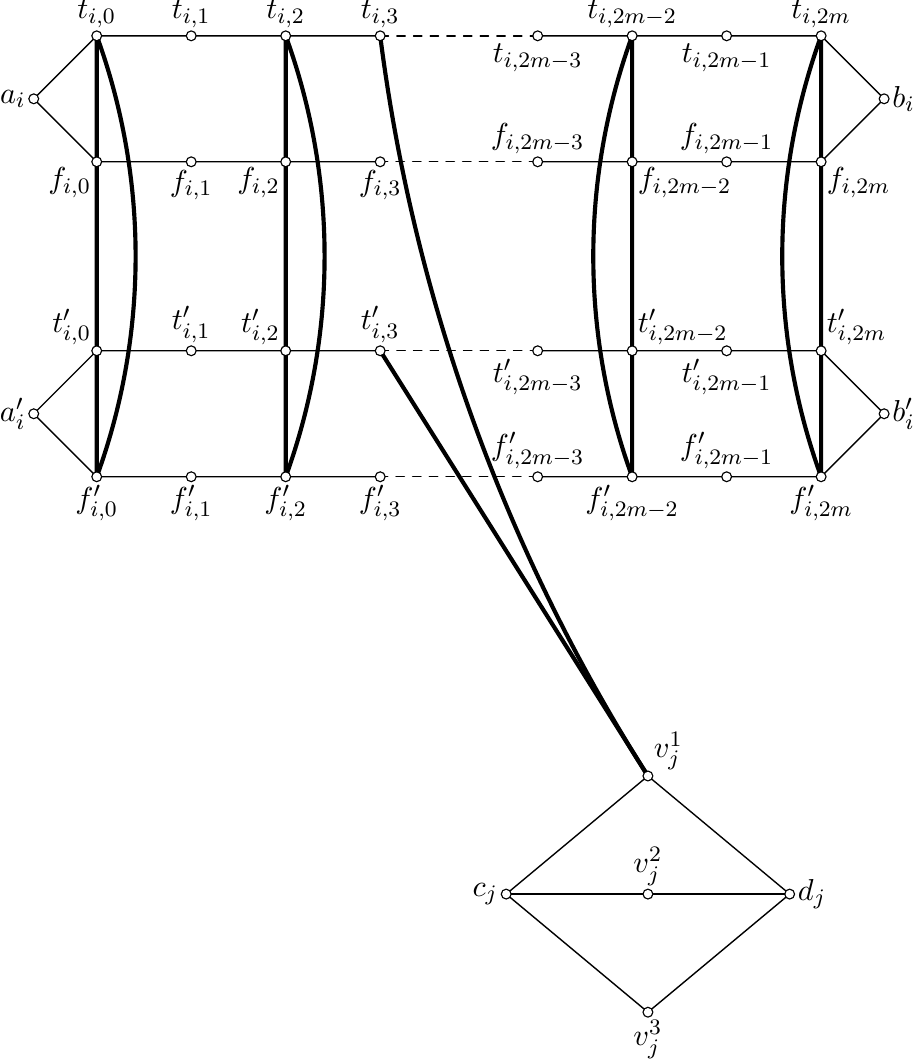}
    \caption{Red edges between $G(x_i)$ and $G(C_j)$ (here
      $j=2)$\label{fig:gf}}
  \end{center}
\end{figure}

\begin{figure}
  \begin{center}
    \includegraphics{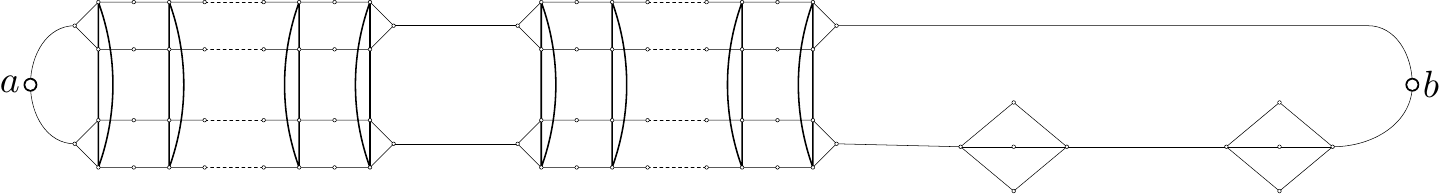}
    \caption{The graph $G_f$\label{fig:gf2}}
  \end{center}
\end{figure}

\begin{lemma}
  \label{l:holeab}
  Let $f$ be an instance of {$3$-SAT} and $G$ be an $f$-graph.  Then,
  $f$ admits a truth assignment if and only if $G$ contains an induced
  cycle through $a$ and $b$.
\end{lemma}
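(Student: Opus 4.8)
I would prove the two implications separately: the ``only if'' direction by exhibiting a cycle, and the ``if'' direction by first showing that an induced cycle through $a$ and $b$ uses no red edge and then reading a satisfying assignment off its black skeleton.

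\emph{From a truth assignment to an induced cycle.} Given an assignment satisfying $f$, let $\Omega$ be made of \emph{all} the connecting edges $aa_1$, $aa'_1$, $b_ia_{i+1}$, $b'_ia'_{i+1}$, $b'_nc_1$, $d_jc_{j+1}$, $bd_m$, $bb_n$, together with: inside each $G(x_i)$, the $t$-path and the $t'$-path if $x_i$ is true and the $f$-path and the $f'$-path if $x_i$ is false; and inside each $G(C_j)$, the two edges $c_jv_j^{p_j}$ and $d_jv_j^{p_j}$ for an index $p_j$ with the literal $u_j^{p_j}$ satisfied. The unprimed pieces form an induced path from $a$ to $b$, the primed pieces form another one, and the two share only $a,b$, so $\Omega$ is a cycle through $a$ and $b$. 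It is induced: subdividing black edges only inserts vertices of degree two and hence no chord, so a chord would be a red edge with both ends on $\Omega$; the two vertices of the red $4$-cycle at an even position $2j$ of $G(x_i)$ that lie on $\Omega$ form $\{t_{i,2j},t'_{i,2j}\}$ or $\{f_{i,2j},f'_{i,2j}\}$, a non-edge of that $4$-cycle; and the red neighbours of a vertex $v_j^{p_j}$ on $\Omega$ are, when $u_j^{p_j}=x_i$, the vertices $f_{i,2j-1}$ and $f'_{i,2j-1}$, which lie off $\Omega$ because $x_i$ is true and $\Omega$ uses the $t$- and $t'$-paths of $G(x_i)$ (symmetrically when $u_j^{p_j}=\overline{x_i}$). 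So $\Omega$ is chordless.

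\emph{From an induced cycle to a truth assignment.} Let $\Omega$ be an induced cycle through $a$ and $b$. The crucial step, call it $(\ast)$, is that \emph{$\Omega$ uses no red edge}. Granting $(\ast)$, $\Omega$ lies in the black graph $B$ obtained from $G$ by deleting all red edges, and in $B$ the pair $\{a,b\}$ separates the ``unprimed part'' (the $a_i$, $b_i$ and the $t$- and $f$-paths) from the ``primed part'' (the $a'_i$, $b'_i$, the $c_j$, $d_j$, $v_j^p$, and the $t'$- and $f'$-paths); since $\deg_B(a)=2$, one arc of $\Omega$ --- call it the first --- runs through the unprimed part and the other through the primed part. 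In the unprimed part of $B$, between $a_i$ and $b_i$ there are exactly two internally disjoint paths (the $t$- and $f$-paths) and consecutive gadgets are joined by one edge, so the elementary facts that no vertex of a cycle has three neighbours on it and that a black edge with both ends on $\Omega$ is an edge of $\Omega$ force the first arc to be $a$, $a_1$, the $t$- or $f$-path of $G(x_1)$, $b_1$, $a_2$, \dots, $b_n$, $b$; the same reasoning (now also using the three parallel length-two paths $c_jv_j^pd_j$ in each $G(C_j)$) forces the second arc to be $a$, $a'_1$, the $t'$- or $f'$-path of $G(x_1)$, $b'_1$, \dots, $b'_n$, $c_1$, $v_1^{p_1}$, $d_1$, \dots, $d_m$, $b$. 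Set $x_i$ to true if the first arc uses the $t$-path of $G(x_i)$ and to false otherwise. Now use that $\Omega$ is induced \emph{in $G$}. For all $i$ and $j$, the red $4$-cycle at position $2j$ of $G(x_i)$ meets $\Omega$ in one vertex of each arc, and these two vertices are non-consecutive on $\Omega$, hence non-adjacent; since the only non-edges of the $4$-cycle are $\{t_{i,2j},t'_{i,2j}\}$ and $\{f_{i,2j},f'_{i,2j}\}$, the second arc makes the same choice as the first in $G(x_i)$. And for each $C_j$, the two red neighbours of $v_j^{p_j}$ --- the vertices $f_{i,2j-1},f'_{i,2j-1}$ if $u_j^{p_j}=x_i$, or $t_{i,2j-1},t'_{i,2j-1}$ if $u_j^{p_j}=\overline{x_i}$ --- are non-consecutive with $v_j^{p_j}$ on $\Omega$, whose neighbours there are $c_j,d_j$; hence they are off $\Omega$, so the first arc avoids the $f$-path (resp.\ the $t$-path) of $G(x_i)$, i.e.\ $x_i$ is true (resp.\ false), which makes $u_j^{p_j}$ satisfied. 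Thus the assignment satisfies $f$.

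\emph{Proof of $(\ast)$ --- the main obstacle.} Suppose $\Omega$ uses a red edge, and pick a red edge $e=uv$ of $\Omega$ closest to $a$ along $\Omega$, with the labelling chosen so that the arc $\Pi$ of $\Omega$ from $a$ to $u$ avoiding $v$ uses no red edge. Running exactly the local reasoning above on the black path $\Pi$ in place of all of $\Omega$, $\Pi$ is an initial segment of one of the two canonical arcs; in particular $u$, being incident to a red edge, is not a hub vertex $a_i,b_i,a'_i,b'_i,c_j,d_j$, so $u$ is an internal vertex of some $t$-, $f$-, $t'$- or $f'$-path or a vertex $v_j^p$, and $e$ is correspondingly an edge of a red $4$-cycle of some $G(x_i)$, or a red edge between a clause vertex and an odd-position vertex of a variable gadget. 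In each of these cases I would follow $\Omega$ for a bounded number of further vertices and, applying over and over the three facts that $a$ and $b$ have degree $2$, that no vertex has three neighbours on $\Omega$, and that a black edge with both ends on $\Omega$ is an edge of $\Omega$, reach a contradiction: either an edge of a $t$-, $f$-, $t'$- or $f'$-path has both ends on $\Omega$ without being an edge of $\Omega$, or one of $a$, an endpoint $a_i,b_i,a'_i,b'_i$ of a parallel path, or a $c_j$ or $d_j$ gets a third neighbour on $\Omega$, or $\Omega$ closes up into a cycle avoiding $b$. This finite case analysis --- each case short because every vertex incident to a red edge has degree at most $4$ and a tightly constrained neighbourhood, assuming (harmlessly) that the literals of each clause are pairwise distinct --- is the technical heart of the argument, and it establishes $(\ast)$ and with it the lemma.
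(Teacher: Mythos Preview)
Your forward direction matches the paper's. The backward direction is where you and the paper diverge, and the divergence is instructive.

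\textbf{What the paper does.} The paper does \emph{not} prove ``no red edge'' first. Instead it proves, by induction on $i$, that $Z\cap G(x_i)$ consists of $a_i,b_i,a'_i,b'_i$ together with either all the $t$- and $t'$-vertices or all the $f$- and $f'$-vertices (this is Claim~1). The red edges are used \emph{inside} this inductive proof: at level $0$ the red $4$-cycle forces both arcs to choose $t,t'$ or $f,f'$ together, and at each subsequent level the red edges rule out the $f$-vertices (say) and block a premature detour through a clause gadget. Once Claim~1 is proved, ``$Z$ contains no red edge'' is a one-line corollary (Claim~2), and the clause-gadget structure follows easily (Claim~3).

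\textbf{What you do.} You reverse the order: you aim to prove $(\ast)$ (``no red edge'') first, via a closest-red-edge argument, and only then read off the structure. Your post-$(\ast)$ analysis is clean and correct --- in particular, using the non-edges of the red $4$-cycle to synchronise the primed and unprimed choices is exactly right. But the weight has shifted onto $(\ast)$, which you defer to an unspecified ``finite case analysis, each case short''. This is optimistic. Once $\Omega$ takes its first red edge at $u$, nothing constrains the continuation beyond $v$: $\Omega$ may traverse several further red edges, may cross between primed and unprimed parts, and your only global information is that the \emph{other} arc out of $a$ is black for at least $|\Pi|$ steps. To close each case you will find yourself reproving, locally, exactly the propagation that the paper packages as Claim~1 --- because the contradictions you list (a black chord, a third neighbour on $\Omega$, a premature closing) are precisely the ones that drive the paper's induction.

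So your route is not wrong, but it postpones rather than avoids the main work; the paper's order (structure first, with red edges as the engine of the induction, and ``no red edge'' as a free corollary) is more economical and is what you should write out.
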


\begin{proof}
  Suppose that $f$ admits a truth assignment $\xi\in\{0, 1\}^n$.  We
  build an induced cycle $H$ in $G_f$ by selecting vertices as follows
  (we build an induced cycle in $G$ later).  Select $a,b$.  For $i=1,
  \ldots, n$, select $a_i, b_i, a'_i, b'_i$; moreover, if $\xi_i=1$
  select $t_{i, 0} \dots t_{i, 2m}$ and $t'_{i, 0} \dots t'_{i, 2m}$,
  while if $\xi_i=0$ select $f_{i, 0} \dots f_{i, 2m}$ and $f'_{i, 0}
  \dots f'_{i, 2m}$.  For $j=1, \ldots, m$, since $\xi$ is a truth
  assignment for $f$, at least one of the three literals of $C_j$ is
  equal to $1$, say $u_j^p=1$ for some $p\in\{1, 2, 3\}$.  Then select
  $c_j, d_j$ and $v_j^p$.  Now it is a routine matter to check that
  the selected vertices induce a cycle $Z$ that contains $a,b$, and
  that $Z$ is chordless, so it is an induced cycle.  The main point is
  that there is no chord in $Z$ between some subgraph $G(C_j)$ and
  some subgraph $G(x_i)$, for that would be either an edge $t_iv_j^p$
  (or $t'_iv_j^p$) with $u_j^p=x_i$ and $\xi_i=1$, or, symmetrically,
  an edge $f_iv_j^p$ (or $f'_iv_j^p$) with $u_j^p=\overline{x}_i$ and
  $\xi_i=0$, in either case a contradiction to the way the vertices of
  $Z$ were selected.

  To build an induced cycle in $G$ (instead of $G_f$), just subdivide
  the black edges of $E(G_f) \cap E(H)$ that were subdivided to obtain
  $G$.

  For the converse statement, we write the proof for $G_f$, the proof
  is the similar for $G$.  Suppose that $G_f$ admits an induced cycle
  $Z$ that contains $a,b$.  Clearly $Z$ contains $a_1, a'_1$ since
  these are the only neighbours of $a$ in $G_f$.

\begin{claim}\label{clm:zgxi}
  For $i=1, \ldots, n$, $Z$ contains exactly $4m+6$ vertices of
  $G(x_i)$: $a_i, a'_i, b_i, b'_i$, and either $t_{i, 0} \dots t_{i,
    2m}$ and $t'_{i, 0} \dots t'_{i, 2m}$, or $f_{i, 0} \dots f_{i,
    2m}$ and $f'_{i, 0} \dots f'_{i, 2m}$.
\end{claim}

\begin{proofclaim} 
  First we prove the claim for $i=1$.  Since $a, a_1$ are in $Z$ and
  $a_1$ has only three neighbours (namely $a, t_{1, 0}, f_{1, 0}$)
  exactly one of $t_{1, 0}, f_{1, 0}$ is in $Z$.  Likewise exactly one
  of $t'_{1, 0}, f'_{1, 0}$ is in $Z$.  If $t_{1, 0}, f'_{1, 0}$ are
  in $Z$, then the vertices $a, a_1, a'_1, t_{1, 0}, f'_{1, 0}$ are
  all in $Z$ and because of the red edge, they induce a cycle that
  does not contain $b$, a contradiction.  Likewise we do not have both
  $t'_{1, 0}, f_{1, 0}$ in $Z$.  Therefore, up to symmetry we may
  assume that $t_{1, 0}, t'_{1, 0}$ are in $Z$.  Thus $f_{1, 0}$ is
  not in $Z$ because of the red edge $t_{1, 0} f_{1, 0}$.  Similarly,
  $f'_{1, 0}$ is not in $Z$.  It follows that $t_{1, 1}$ and $t'_{1,
    1}$ are in $Z$.

  If a vertex $u_j^p$ of some $G(C_j)$ ($1\le j\le m$, $1\le p\le 3$)
  is in $Z$ and is adjacent to $t_{1, 1}$ then, since this $u_j^p$ is
  also adjacent to $t'_{1, 1}$, we see that the vertices $a, a_1,
  a'_1, t_{1, 0}, t'_{1, 0}, t_{1, 1}, t'_{1, 1}$ and $u_j^p$ are all
  in $Z$ and induce a hole that does not contain $b$, a contradiction.
  Thus the neighbour of $t_{1, 1}$ in $Z\setminus t_{1, 0}$ is not in
  any $G(C_j)$ ($1\le j\le m$), so that neighbour is $t_{1, 2}$.
  Likewise $t'_{1, 2}$ is in $Z$.  By the same argument, it can be
  proved that $t_{1, 3}$, \dots, $t_{1, 2m}$ and $t'_{1, 3}$, \dots,
  $t'_{1, 2m}$ are all in $Z$.  Also, for $k=1, \dots, m$, $f_{1, 2k}$
  is not in $Z$ because of the red edge $t_{1, 2k} f_{1, 2k}$ and
  similarly, $f'_{1, 2k}$ is not in $Z$.  Since $f_{1, 2k-1}$ has
  degree at most~3, it cannot be in $Z$ because one of its neighbor in
  $Z$ would be $f_{1, 2k-2}$ or $f_{1, 2k}$.  It follows that $b_1$
  and $b'_1$ are in $Z$.
  
  So the claim holds for $i=1$.  Since $f_{1, 2m}$ is not in $Z$, we
  see that $a_2$ is in $Z$ and similarly that $a'_2$ is in $Z$.  Now
  the proof of the claim for $i=2$ is essentially the same as for
  $i=1$, and so on up to $i=n$.
\end{proofclaim}

\begin{claim}\label{clm:nored}
  $Z$ contains no red edge of $G_f$.
\end{claim}

\begin{proofclaim}
  Follows directly from~(\ref{clm:zgxi}).
\end{proofclaim}

\begin{claim}\label{clm:zgcj}
  For $j=1, \ldots, m$, $Z$ contains $c_j, d_j$ and exactly one of
  $v_j^1, v_j^2, v_j^3$.
\end{claim}

\begin{proofclaim}
  First we prove this claim for $j=1$.  By~(\ref{clm:zgxi}), $b'_n$ is
  in $Z$ and exactly one of $t'_{n, 2m}, f'_{n, 2m}$ is in $Z$, so
  (since $b'_n$ has degree $3$ in $G_f$) $c_1$ is in $Z$. So,
  by~(\ref{clm:nored}) $Z$ contains exactly one of the paths
  $c_1v_1^1d_1$ , $c_1v_1^2d_1$ or $c_1v_1^3d_1$.  Thus, the neighbor
  of $d_1$ in $Z \sm \{v_1^1, v_1^2, v_1^3\}$ must be $c_2$. Now the
  proof of the claim for $j=2$ is the same as for $j=1$, and similarly
  the claim holds up to $j=m$.
\end{proofclaim}

We can now make a Boolean vector $\xi$ as follows.  For $i=1, \ldots,
n$, if $Z$ contains $t_{i, 0}, t'_{i, 0}$ set $\xi_i = 1$; if $Z$
contains $f_{i, 0}, f'_{i, 0}$ set $\xi_i = 0$.  By~(\ref{clm:zgxi})
this is consistent.  Consider any clause $C_j$ ($1\le j\le m$).
By~(\ref{clm:zgcj}) and up to symmetry we may assume that $v_j^1$ is
in $Z$.  If $v_j^1 = x_i$ for some $i\in\{1, .., n\}$, then the
construction of $G_f$ implies that $f_{i, 2j-1}, f'_{i, 2j-1}$ are not
in $Z$, so $t_{i, 0}, t'_{i, 0}$ are in $Z$, so $\xi_i=1$, so clause
$C_j$ is satisfied by $x_i$.  If $v_j^1 = \overline{x}_i$ for some
$i\in\{1, .., n\}$, then the construction of $G_f$ implies that $t_{i,
  2j-1}, t'_{i, 2j-1}$ are not in $Z$, so $f_{i, 0}, f'_{i, 0}$ are in
$Z$, so $\xi_i=0$, so clause $C_j$ is satisfied by $\overline{x}_i$.
Thus $\xi$ is a truth assignment for $f$.  This completes the proof of
the lemma.
\end{proof}

A \emph{hub} in a graph is a vertex that has at least three neighbors
of degree at least 3.  Note that the center of a wheel is a hub.  This
simple observation will be useful in the next section.

\begin{theorem}
  \label{th:holeBipNoHub}
  Let $k$ be an integer.  The problem of detecting an induced cycle of
  length at least $k$ through two prescribed vertices $a$ and $b$ of
  degree 2 in an input graph is NP-complete, even when restricted to
  bipartite graphs with no hub.
\end{theorem}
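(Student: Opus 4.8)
The plan is to reduce from 3-SAT by invoking Lemma~\ref{l:holeab}, but applying it not to an $f$-graph directly: instead I would start from a well-chosen $f$-graph $G$ and then modify it so that (i) the target vertices $a,b$ have degree exactly $2$, (ii) the graph becomes bipartite, (iii) the graph has no hub, and (iv) the property ``$G$ has an induced cycle through $a$ and $b$'' is preserved and, moreover, any such induced cycle is automatically long (length at least $k$). For (iv), the key trick is subdivision: given an arbitrary instance $f$ of 3-SAT, one chooses the $f$-graph $G$ by subdividing every black edge of $G_f$ a large even number of times, where ``large'' is chosen relative to $k$. Since Lemma~\ref{l:holeab} tells us the existence of an induced cycle through $a,b$ is equivalent to satisfiability of $f$ no matter how the black edges are subdivided, heavy subdivision does not affect the equivalence; but any induced cycle through $a$ and $b$ must (by the structure revealed in the proof of Lemma~\ref{l:holeab}, in particular Claims~\ref{clm:zgxi}--\ref{clm:zgcj}) traverse many black edges of $G_f$, hence after subdivision it has length at least $k$. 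So length-$\geq k$ comes for free.

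For (i), in $G_f$ the vertex $a$ already has exactly two neighbours ($a_1$ and $a'_1$) and $b$ has exactly two neighbours ($d_m$ and $b_n$), so if we never subdivide the four edges $aa_1, aa'_1, bd_m, bb_n$ (they are black, but subdividing zero times is allowed) then $a$ and $b$ already have degree $2$; alternatively subdividing those edges does not increase the degree of $a$ or $b$ either, so this is essentially automatic. For (iii), no hub: a hub is a vertex with at least three neighbours of degree $\geq 3$; the only vertices of $G_f$ with degree $\geq 3$ are the ``$v$'' vertices $v_j^p$, the junction vertices $c_j,d_j,b_n',b_i,\dots$, and the endpoints $a_i,b_i,\dots$ of the four parallel paths inside each $G(x_i)$, together with the red-edge endpoints $t_{i,2j},f_{i,2j}$ etc. After subdividing every black edge at least once, every endpoint of a black edge that we want to ``neutralise'' gets a degree-$2$ subdivision vertex as a neighbour instead of a high-degree vertex; one checks, going through the (bounded) list of vertex types, that one subdivision per black edge suffices to guarantee that no vertex has three neighbours of degree $\geq 3$. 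This is the routine case analysis I would actually carry out, but it is purely local and mechanical.

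For (ii), bipartiteness: $G_f$ itself is typically not bipartite (the triangles $c_jv_j^pd_j$ are even, but the red $4$-cycles $t_{i,2j}f_{i,2j}t'_{i,2j}f'_{i,2j}$ are even too — actually the obstruction is the parity around longer cycles created by red edges joining the parallel paths and by the $G(C_j)$ gadgets). The standard fix, already implicit in Bienstock-type constructions, is again subdivision: subdividing a black edge once flips the parity of every cycle through it. Since every cycle in $G_f$ that is not entirely inside a single red $4$-cycle uses black edges, one can $2$-colour the black edges of $G_f$ appropriately and subdivide each black edge either once or twice (consistently) so that all cycles become even; concretely, compute for $G_f$ a potential edge-parity assignment making it ``evenly signable'' and subdivide accordingly. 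One must check this subdivision pattern is compatible with the parity requirements already imposed for goals (iii) and (iv) — but since (iii) only asked for at least one subdivision and (iv) only asked for many subdivisions, there is plenty of slack: subdivide each black edge exactly $2N$ or $2N+1$ times for a large $N$, with the parity chosen per edge to kill odd cycles. Then $G$ is bipartite.

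Finally I would assemble these: fix $k$, take an arbitrary 3-SAT instance $f$, build $G$ from $G_f$ by the prescribed subdivision (large, with per-edge parity chosen to make $G$ bipartite), and keep $a,b$ at degree $2$. The construction is polynomial in $|f|+k$ (quadratic in $n+m$, linear in $N\sim k$). By Lemma~\ref{l:holeab}, $f$ is satisfiable iff $G$ has an induced cycle through $a$ and $b$; by the structural analysis in that lemma's proof every such cycle traverses $\Omega(n+m)$ black edges of $G_f$ hence has length $\geq k$ in $G$; and $G$ is bipartite with no hub and $\deg(a)=\deg(b)=2$. Since 3-SAT is NP-complete and membership in NP is clear (a cycle is a polynomial certificate, checkable in polynomial time), the stated problem is NP-complete. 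The main obstacle is item~(ii): verifying that the subdivision pattern needed for bipartiteness can be realised simultaneously with ``degree $2$ at $a,b$'' and ``no hub'' and ``long cycles'' — but because the last two are monotone-ish in the amount of subdivision and bipartiteness only constrains parities, the constraints do not conflict, and the verification is a finite check over the gadget's vertex and edge types.
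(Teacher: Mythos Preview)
Your proposal is correct and follows essentially the same approach as the paper: reduce from 3-SAT via Lemma~\ref{l:holeab}, then subdivide black edges of $G_f$ to force long cycles, to kill hubs, and to fix parities so the result is bipartite. The only cosmetic differences are that the paper subdivides a single edge ($aa_1$) $k$ times to force length~$\geq k$ rather than subdividing all black edges heavily, and the paper makes the bipartiteness step more concrete by observing that after one subdivision of each black edge the subgraph $G_{\geq 3}$ induced on vertices of degree~$\geq 3$ has only red edges and decomposes into $4$-cycles, $P_3$'s, and isolated vertices (hence is bipartite), so its $2$-colouring can be extended by adjusting the parity of each black path---exactly the ``per-edge parity choice'' you sketch.
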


\begin{proof}
  We consider an instance $f$ of 3-SAT and we build an instance $(G,
  a, b)$ of our problem such that $f$ can be satisfied if an only if
  an induced cycle of length at least $k$ of $G$ goes through $a$ and
  $b$.
  
  To do so, we apply Lemma~\ref{l:holeab} and we start with the graph
  $G_f$ defined above.  We now subdivide carefully chosen black edges.
  First, we subdivide the edge $aa_1$ $k$ times so that any cycle
  through $a$ has length at least $k$.  Since every vertex of $G$ is
  adjacent to at most 2 red edges, it is possible to eliminate all
  hubs by subdividing once each black edge.  Now consider the vertices
  of degree at least~3 in $G$.  They induce a graph $G_{\geq 3}$.  The
  components of $G_{\geq 3}$ are cycles of length~4, paths on 3
  vertices and isolated vertices.  Thus $G_{\geq 3}$ is bipartite, and
  we choose a bipartition into blue and green vertices (it is not
  unique since $G_{\geq 3}$ is not connected).  Now the fact that the
  bipartition of $G_{\geq 3}$ can be extended to a bipartition of $G$
  depends only on the parity of the paths of black edges linking the
  components of $G_{\geq 3}$.  It follows that by subviding one edge
  or no edges in each of these paths, a bipartite graph can be
  obtained.
\end{proof}

\section{Detecting wheels}
\label{sec:wheels}

Let $3\leq k$ and $0 \leq l \leq k$ be integers.  A \emph{$(k,
  l)$-wheel} is a graph made of a chordless cycle of length at least
$k$ together with a vertex that has at least $l$ neighbors in the
cycle.  Thus a wheel is a $(4, 3)$-wheel.  An article of Aboulker et
al.~\cite{aboulkerRTV:propeller} is devoted to the detection of $(3,
2)$-wheels (that are called \emph{propellers}).  The last fifty pages
of the proof of the SPGT are devoted to studying Berge graphs that
contain particular kinds of $(6, 3)$-wheels or their complement.
These examples are our motivation for not restricting ourselves to
the study of wheels.

If $3 \leq k$ and $3 \leq l \leq k$ are integers, then the center of a
$(k, l)$-wheel is a hub.  This simple observation is very useful to
transform Lemma~\ref{l:holeab} into the next theorem.

We denote by $\Pi_{k, l}$ the problem whose instance is a graph $G$
and whose question is ``does $G$ contain a $(k, l)$-wheel as an
induced subgraph?''.  We do not know the complexity of $\Pi_{k, l}$
when $l = 2$ and $k \geq 4$ (for large values of $k$, we believe that
the question is quite chalenging).  The next theorem settles all the
other cases.

\begin{figure}
  \begin{center}
    \includegraphics{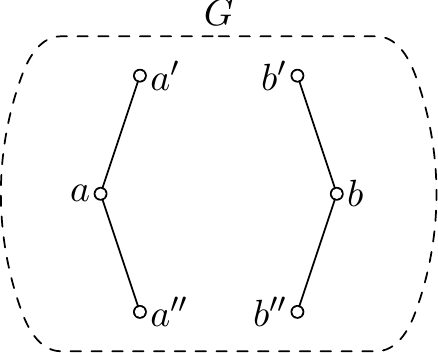}
    \rule{1em}{0ex}\includegraphics{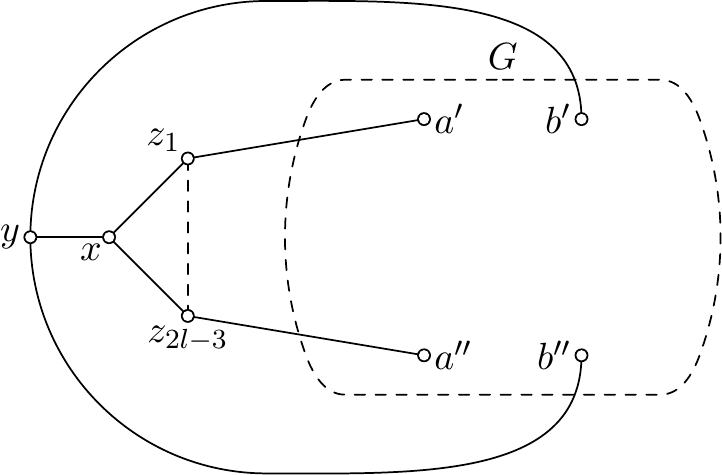}
    \caption{$G$ and $G'$\label{fig:wheel}}
  \end{center}
\end{figure}

\begin{theorem}
  Let $3 \leq k$ and $0 \leq l \leq k$ be integers.  The problem
  $\Pi_{k, l}$ is polynomial if $l\leq 1$ and is NP-complete if $l
  \geq 3$ (it remains NP-complete when restricted to bipartite
  instances).  If $l=2$ and $k=3$ then $\Pi_{k, l}$ is polynomial.
\end{theorem}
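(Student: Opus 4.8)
The statement has three parts, and I would handle them in increasing order of difficulty.

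First, the polynomial cases with $l \leq 1$. When $l=0$, a $(k,0)$-wheel is just any chordless cycle of length at least $k$ (the ``center'' vertex need not have any neighbor on the cycle, and may be taken anywhere, or the condition is vacuous once the graph has enough vertices); detecting a hole of length at least $k$ is polynomial by a standard BFS/shortest-path argument (or simply: test all pairs of vertices and short certificates, since a shortest such hole has bounded-length structure), so $\Pi_{k,0}$ is polynomial. When $l=1$, a $(k,1)$-wheel is a chordless cycle $C$ of length $\geq k$ together with a vertex having at least one neighbor on $C$; this exists iff $G$ has a hole of length $\geq k$ that is not a connected component consisting exactly of that cycle — again easily decided in polynomial time by first finding all long holes and then checking for an external neighbor. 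For $l=2$ and $k=3$: a $(3,2)$-wheel is exactly a propeller in the terminology of Aboulker et al.~\cite{aboulkerRTV:propeller}, whose detection is shown to be polynomial there; so I would simply cite that result.

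Second, the main case $l \geq 3$, which is the heart of the theorem. Here I would use Theorem~\ref{th:holeBipNoHub}. Recall its output: a bipartite graph $H$ with no hub, together with two vertices $a,b$ of degree $2$, such that $H$ has an induced cycle of length $\geq k$ through $a$ and $b$ iff the given 3-SAT instance is satisfiable. The idea is to modify $H$ by attaching a gadget at $a$ and $b$ that forces any $(k,l)$-wheel to have its rim pass through both $a$ and $b$. The key observation, already flagged in the excerpt, is that the center of a $(k,l)$-wheel with $l \geq 3$ is a hub. Since $H$ has \emph{no} hub, the center of any $(k,l)$-wheel in the modified graph $G'$ must be one of the newly added vertices. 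So I would add a single new vertex $w$ (the prospective center) adjacent to $a$, to $b$, and to one additional new vertex $z$, where $z$ is in turn adjacent to $a$ and $b$ (see Figure~\ref{fig:wheel}); one checks $w$ now has exactly three neighbors of degree $\geq 3$, making $w$ the unique hub of $G'$, hence the only possible center. Then a $(k,l)$-wheel centered at $w$ consists of $w$ together with a hole $C$ through at least $l \geq 3$ neighbors of $w$; since $w$'s neighbors are $a$, $b$, $z$, and $z$ has degree $3$, the hole $C$ must contain $a$, $b$, and the short detour through $z$ accounts for one of the required neighbors while forcing the long part of $C$ to run through $H$ from $a$ to $b$ avoiding $z$. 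By subdividing the edge $aa_1$ sufficiently (as in Theorem~\ref{th:holeBipNoHub}) one guarantees length $\geq k$; and by a parity/subdivision argument as in that theorem, $G'$ can be kept bipartite. Conversely, a satisfying assignment yields the induced $a$–$b$ cycle in $H$, which together with $w$ and $z$ gives a $(k,l)$-wheel. The main obstacle is getting the gadget exactly right so that (i) it introduces exactly one hub and no other structure that could serve as a wheel independently of the 3-SAT instance, (ii) it does not create spurious short chordless cycles that would give a $(k,l)$-wheel for trivial reasons, and (iii) bipartiteness is preserved; this requires a careful but routine case analysis of the few vertices involved.

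Finally, I would assemble the three cases into the statement. The forward reductions for $l \geq 3$ are polynomial-time (the construction $G_f$ is quadratic-size, the subdivisions and the gadget add only polynomially many vertices), and membership of $\Pi_{k,l}$ in NP is immediate since a $(k,l)$-wheel is a polynomial-size certificate that can be verified in polynomial time. So the $l\geq 3$ cases are NP-complete, even on bipartite instances, and combined with the polynomial cases $l \leq 1$ and $(l,k)=(2,3)$ this gives exactly the claimed dichotomy, leaving only $l=2, k \geq 4$ open as stated.
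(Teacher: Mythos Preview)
Your overall plan is sound, and the polynomial cases and NP membership are fine, but the gadget you describe for $l\ge 3$ has two genuine flaws.

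First, it does not depend on $l$. Your prospective center $w$ has exactly three neighbors ($a$, $b$, $z$), so for $l\ge 4$ no $(k,l)$-wheel can ever be centered at $w$, and the ``converse'' direction of the reduction fails outright: even when $f$ is satisfiable and $H$ has the desired hole through $a$ and $b$, the resulting rim meets only three neighbors of $w$. The paper handles this by replacing $a$ with a path $z_1\cdots z_{2l-3}$ joining the two neighbours $a',a''$ of $a$ and making the center $x$ adjacent to the $l-1$ odd-indexed vertices $z_1,z_3,\dots,z_{2l-3}$ together with one further vertex $y$, so that $x$ has exactly $l$ neighbors.

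Second, and more seriously, even for $l=3$ your gadget is too weak. A rim of a wheel centered at $w$ must contain $a$, $z$, $b$; since $z$'s only non-$w$ neighbors are $a$ and $b$, the rim is forced to be $a\text{--}z\text{--}b\text{--}P\text{--}a$ where $P$ is an induced $a$--$b$ path in $H$. But an induced $a$--$b$ \emph{path} in $H$ is much easier to get than an induced \emph{cycle} through $a$ and $b$: in every $f$-graph of Section~\ref{sec:base} the path $a\,a_1\,t_{1,0}\cdots t_{1,2m}\,b_1\,a_2\cdots b_n\,b$ (suitably subdivided) is induced, regardless of whether $f$ is satisfiable. Hence your $G'$ \emph{always} contains a $(k,3)$-wheel and the reduction collapses. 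The paper avoids this by deleting $a$ and $b$ and attaching the gadget to their four neighbors $a',a'',b',b''$: the center $x$ sees the path replacing $a$ (between $a'$ and $a''$) and a vertex $y$ replacing $b$ (between $b'$ and $b''$). Any rim through all of $x$'s neighbors must then contain two internally disjoint $\{a',a''\}$--$\{b',b''\}$ paths in $H\setminus\{a,b\}$, which is precisely what reconstructs an induced cycle through $a$ and $b$ in $H$.

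(A minor side point: in your gadget $z$ is also a hub, since its three neighbors $a$, $b$, $w$ all have degree at least $3$, so the uniqueness claim for $w$ is false as stated; but this alone would not have been fatal.)
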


\begin{proof}
  If $l = 0$, then $\Pi_{k, l}$ consists in detecting an induced cycle
  of length at least $k$, and if $l = 1$ it can be reduced easily to
  the detection of an induced cycle of length at least $k$ through a
  prescribed vertex.  These problems are clearly polynomial.  If $l=2$
  and $k=3$, then the polynomiality of $\Pi_{k, l}$ is a result
  in~\cite{aboulkerRTV:propeller} where it is refered to as the
  \emph{detection of propellers}.

  Let us now study the NP-complete cases, so suppose $l\geq 3$.
  Consider an instance $(G, a, b)$ of the problem from
  Theorem~\ref{th:holeBipNoHub} (so $G$ is bipartite).  Call $a', a''$
  the two neighbors of $a$ and $b',b''$ the two neighbors of
  $b$. Build a graph $G'$ as follows.  Delete $a$ and $b$ from $G$.
  Add a path $z_1 \dots z_{2l-3}$ and the edges $z_1 a'$ and
  $z_{2l-3}a''$.  Add a vertex $x$ adjacent to $z_1, z_3, \dots,
  z_{2l-3}$.  Add a vertex $y$ adjacent to $x$, $b'$ and $b''$ (see
  Fig.~\ref{fig:wheel}). After possibly subdividing once or twice
  $z_1a'$, $z_{2l-3}a''$, $yb'$ and $yb''$, $G'$ is bipartite and the
  unique hub of $G'$ is~$x$.

  A $(k, l)$-wheel of $G'$ must therefore be centered at $x$ and must
  contain an induced path from $a'$ to $b'$ and an induced path from
  $a''$ to $b''$ (or an induced path from $a'$ to $b''$ and an induced
  path from $a''$ to $b'$).  In either case, $G$ contains a hole that
  goes through $a$ and $b$.

  Conversely, if $G$ contains a hole through $a$ and $b$, then $G'$
  contains a $(k, l)$-wheel (centered at $x$).
\end{proof}

The class of Berge graphs is self-complementary, and detecting a
structure in a Berge graph or its complement is sometimes useful.
This motivates the next problem.  We denote by $\overline{\Pi}_{k, l}$
the problem whose instance is a bipartite graph $G$ and whose question
is ``does one of $G$ or $\overline{G}$ contain a $(k, l)$-wheel as an
induced subgraph?''.  The next theorem settles the complexity of
$\overline{\Pi}_{k, l}$ in several cases (the other cases are open).

\begin{theorem}
  Let $3 \leq k$ and $0 \leq l \leq k$ be integers.  The problem
  $\overline{\Pi}_{k, l}$ is polynomial if $k\leq 4$ and NP-complete
  if $k\geq 5$ and $l\geq 3$ (it remains NP-complete when restricted
  to bipartite instances).
\end{theorem}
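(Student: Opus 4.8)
The plan is to prove each of the two parts of the statement separately, using the tools already developed.

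\medskip

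\textbf{The polynomial case $k \leq 4$.} Here we must show that deciding whether $G$ or $\overline{G}$ contains a $(k,l)$-wheel is polynomial for every bipartite $G$ and every $l$ with $0 \le l \le k \le 4$. Since $G$ is bipartite, $G$ has no triangle, hence no hole of length $3$; for the values $k \le 4$ the rim is a cycle of length at least $k$, and detecting such a cycle (or a cycle of length $\ge k$ with a vertex having $\ge l$ neighbours on it) reduces to detecting a hole, or a hole with a chord at a vertex, or a small fixed configuration — all of which can be tested in polynomial time in a bipartite graph by brute force over the center and a few vertices of the rim, plus a shortest-path computation for the remaining part of the rim (this is the standard ``three-in-a-tree''-free style argument, but for $k\le 4$ it is elementary). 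So the real work is $\overline{G}$. The key point is that $\overline{G}$ is the complement of a bipartite graph, so $\overline{G}$ has independence number at most $2$; equivalently, $\overline{G}$ is the complement of a bipartite graph and therefore has a very restricted structure — its holes and its induced cycles are all short. In fact, an induced cycle in $\overline{G}$ corresponds to an induced cycle in the complement, and since $\alpha(\overline G)\le 2$, any induced cycle of $\overline G$ of length $\ge 4$ would have an independent set of size $\ge 3$ unless it has length exactly $4$ or $5$; a cycle of length $\ge 6$ has three pairwise non-adjacent vertices, so $\overline G$ has no induced cycle of length $\ge 6$. Hence in $\overline G$ the only possible rims have length $4$ or $5$, giving only finitely many candidate wheel-types on at most $6$ vertices; these can be detected by checking all subsets of bounded size, which is polynomial. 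I would organize this as: (1) reduce the $G$-side to hole detection in a bipartite graph and invoke known polynomial algorithms; (2) observe $\alpha(\overline G) \le 2$; (3) conclude induced cycles of $\overline G$ have length $\le 5$; (4) brute-force over the $O(n^{|V|})$ candidate configurations of size $\le 6$.

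\medskip

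\textbf{The NP-complete case $k \geq 5$, $l \geq 3$.} Because the rim of a $(k,l)$-wheel with $k \geq 5$ has length $\geq 5$, the previous paragraph shows that $\overline{G}$ cannot contain a $(k,l)$-wheel when $G$ is bipartite (its induced cycles are too short). Therefore, for bipartite $G$, the problem $\overline{\Pi}_{k,l}$ is \emph{equivalent} to asking whether $G$ itself contains a $(k,l)$-wheel — that is, to $\Pi_{k,l}$ restricted to bipartite instances. But the previous theorem already established that $\Pi_{k,l}$ is NP-complete on bipartite instances whenever $l \geq 3$, and in particular whenever $k \geq 5$ and $l \geq 3$ (the construction $G'$ there is bipartite and has rim length at least $k$). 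Membership in NP is clear: guess the vertex set of the wheel and verify it in polynomial time (and complementation is polynomial). So the reduction is immediate: the same instance $G'$ produced in the proof of the previous theorem witnesses hardness of $\overline{\Pi}_{k,l}$, once we note that the $\overline{G'}$-side contributes nothing.

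\medskip

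The main obstacle is the careful bookkeeping in the polynomial case: one must check that for \emph{every} pair $(k,l)$ with $k \le 4$ — including degenerate values like $l = 0,1$ and the case $k=3,l=2$ of propellers, and $k=4,l\in\{3,4\}$ (the true ``wheel'' on a $4$-cycle) — both the $G$-side and the $\overline G$-side detection problems are genuinely polynomial, handling the small cases $\overline G$ separately and confirming there are no overlooked configurations. The $\overline G$-side is conceptually the delicate one, but the bound $\alpha(\overline G)\le 2$ makes it a finite check; the essential insight, which trivializes both parts at once, is that complements of bipartite graphs have no long induced cycles, so only short rims survive on that side.
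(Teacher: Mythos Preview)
Your NP-complete case is essentially the paper's argument, modulo a small patch: you only show that induced cycles of $\overline{G}$ have length at most~5, which for $k=5$ still permits a rim of length~5. You need to rule out $C_5$ as well, which is immediate since $\overline{C_5}=C_5$ is not bipartite; the paper makes exactly this remark.

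Your polynomial case, however, has a genuine gap. You propose to handle the $G$-side and the $\overline{G}$-side \emph{separately}, claiming that detecting a $(k,l)$-wheel in a bipartite $G$ with $k\le 4$ is polynomial via ``brute force over the center and a few vertices of the rim, plus a shortest-path computation''. But this is precisely what the previous theorem shows is \emph{false}: $\Pi_{k,l}$ is NP-complete on bipartite inputs for every $l\ge 3$, in particular for $(k,l)\in\{(3,3),(4,3),(4,4)\}$. The shortest-path idea fails for the usual reason --- the completing path may acquire chords to the center or to the already-fixed rim vertices --- and there is no way around this without exploiting the $\overline{G}$-side.

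The paper's approach is therefore fundamentally different from yours: rather than solve the two halves independently, it shows that the disjunction is easier than either disjunct. Concretely, the complement of an induced $P_7$ (on $v_1,\dots,v_7$) contains a $(4,4)$-wheel on $\{v_1,v_2,v_4,v_5,v_7\}$. Any $(k,l)$-wheel on at least nine vertices (with $k\le 4$) has a rim of length $\ge 8$, hence contains an induced $P_7$, hence its complement already contains a $(k,l)$-wheel on five vertices. Thus the answer to $\overline{\Pi}_{k,l}$ is ``yes'' if and only if $G$ or $\overline{G}$ contains a $(k,l)$-wheel on at most eight vertices, and this is an $O(n^8)$ brute-force check. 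The insight you are missing is this self-complementary reduction from large wheels to small ones; without it, the $G$-side remains NP-hard on its own.
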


\begin{proof}
  Suppose that $k \leq 4$ (so $l\leq 4$).  Consider a chordless path
  $v_1 \dots v_7$.  In the complement, the vertices $v_1, v_2, v_4,
  v_5, v_7$ induce a $(4, 4)$-wheel (and therefore a $(k, l)$-wheel
  for any $k\leq 4$ and $0 \leq l \leq k$).  Since a $(k, l)$-wheel on
  at least nine vertices contains an induced cycle of length 8, it
  also contains an induced path of length 7, and therefore a $(k,
  l)$-wheel on 5 vertices in the complement.  It follows that the
  answer to $\overline{\Pi}_{k, l}$ is yes if and only if the input
  graph or its complement contains a wheel on at most eight vertices.
  This can be tested by brute force enumeration in time $O(n^8)$.

  When $k\geq 5$, the complement of a bipartite graph cannot contain a
  $(k, l)$-wheel.  Because $C_5$ is self-complementary and a cycle of
  length at least~6 contains a stable set of size 3 (that does not
  exist in the complement of a bipartite graph).  Thus, the
  NP-completeness of $\overline{\Pi}_{k, l}$ directly follows from
  NP-completeness of ${\Pi}_{k, l}$.
\end{proof}

In fact, we can be faster for $\overline{\Pi}_{4, 3}$, that is the
problem of detecting a wheel in an input graph or in its
complement. We will need the following result which appears as
Theorem~3.1 in Nikolopoulos and Palios' paper~\cite{nikolopoulos.s}.

\begin{theorem}\label{Prop_NPAlgorithm}
  There is an $O(n + m^2)$ time algorithm that determines whether an input
  graph $G$ contains a hole of size at least five.
\end{theorem}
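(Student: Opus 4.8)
\emph{This is the result of Nikolopoulos and Palios~\cite{nikolopoulos.s}; the plan below sketches how such an algorithm can be obtained.}

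First I would reduce the problem to a cleaned-up path-finding task via the following observation: $G$ contains a hole of size at least five if and only if there is a vertex $y$, two non-adjacent neighbours $x,z$ of $y$, and an induced path $Q$ from $x$ to $z$ of length at least three whose interior is disjoint from $N[y]$. Indeed, given such a configuration, $\{y\}\cup V(Q)$ induces a cycle of length at least five; conversely, in any hole of size at least five one may take for $y$ an arbitrary vertex of the hole, for $x,z$ its two neighbours on the hole (which are non-adjacent, since the hole is chordless and has length at least four), and for $Q$ the subpath of the hole joining $x$ and $z$ --- its interior avoids $N[y]$ precisely because the hole is chordless.

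Next, for each vertex $y$ I would compute, by a single graph search, the connected components $K_1,\dots,K_t$ of $G-N[y]$, and record for every neighbour $x$ of $y$ which of these components it has a neighbour in. Given non-adjacent $x,z\in N(y)$, a path $Q$ as above exists, of length at least two, as soon as some component $K_r$ is seen by both $x$ and $z$: a shortest $x$--$z$ path inside the induced subgraph $G[\{x,z\}\cup K_r]$ is induced and has its interior in $K_r$. To force length at least three one uses the refinement that such a $Q$ exists if and only if, after deleting from $K_r$ the common neighbours of $x$ and $z$, some remaining component still meets both $N(x)$ and $N(z)$ (a shortest path through such a component then has length at least three, since a length-two path would require a surviving common neighbour). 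So the algorithm answers ``yes'' exactly when, for some $y$, some component $K_r$ of $G-N[y]$, and some non-adjacent $x,z\in N(y)$, this refined connectivity condition holds, and ``no'' otherwise.

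For the running time, after discarding isolated vertices (which lie in no hole) we may assume $n=O(m)$, so that the per-vertex graph searches cost $O(nm)=O(m^2)$ in total, and the bookkeeping of which components each neighbour of $y$ sees costs $O(\sum_{x\in N(y)}\deg(x))$ per $y$, again $O(m^2)$ overall; note also $\sum_y \deg(y)^2 \le 2m\Delta = O(m^2)$, so there is room to spend $O(\deg(y)^2)$ per vertex $y$. The genuinely delicate part --- and the step I expect to be the main obstacle --- is to carry out, for a fixed $y$, the test over all pairs $x,z$ of its neighbours (including the deletion of common neighbours and the connectivity check) within a budget of roughly $O(\deg(y)^2+m)$, rather than paying a naive factor of $n+m$ per pair, so that the overall cost stays $O(n+m^2)$; the data structures achieving this are supplied in~\cite{nikolopoulos.s}.
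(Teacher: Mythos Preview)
The paper does not give its own proof of this theorem: it is quoted verbatim as Theorem~3.1 of Nikolopoulos and Palios~\cite{nikolopoulos.s} and used as a black box, so there is no in-paper argument to compare your proposal against. Your proposal already acknowledges this, and the sketch you give---reducing to the search, over all $y$ and non-adjacent $x,z\in N(y)$, for an induced $x$--$z$ path of length at least three through a component of $G-N[y]$, with the refinement of stripping common neighbours of $x$ and $z$---is indeed the shape of the Nikolopoulos--Palios algorithm; you also correctly flag that the nontrivial content is the per-vertex bookkeeping that keeps the total cost at $O(n+m^2)$, which is exactly what their paper supplies.
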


We also need the next two little facts.

\begin{lemma}
  \label{l:h}
  If a graph $G$ contains a hole $H$ of length at least 5,  then either
  $G=H$ or one of  $G$ or $\overline{G}$ contains a wheel.
\end{lemma}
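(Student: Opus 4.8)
The plan is to assume $G \neq H$, so there is a vertex $v \in V(G) \setminus V(H)$, and to analyze how many neighbors $v$ has on the hole $H$. Write $|V(H)| = h \geq 5$. If $v$ has at least three neighbors in $H$, then $v$ together with $H$ is a wheel, and we are done. So I would assume that every vertex outside $H$ has at most two neighbors in $H$; similarly, if some vertex of $H$ had at least three neighbors in $V(G) \setminus V(H)$ we might also be able to find a wheel, but let me instead focus on the dichotomy with the complement.

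The key idea is that if no vertex outside $H$ sees three or more vertices of $H$, then in $\overline{G}$ the hole $H$ becomes an antihole $\overline{H}$ of length $h \geq 5$, and any vertex $v$ outside has at least $h - 2 \geq 3$ non-neighbors in $H$, i.e.\ at least three neighbors in $\overline{H}$ in the complement. So in $\overline{G}$, the vertex $v$ has at least three neighbors on the antihole $\overline{H}$. The remaining point is to check that $\overline{H}$ contains a hole (of length at least 4) on which $v$ still has at least three neighbors, so that we actually get a wheel in $\overline{G}$. For $h = 5$, the antihole $\overline{C_5}$ is itself a $C_5$, and $v$ has at least three neighbors on it, so $v + \overline{H}$ is a wheel in $\overline G$. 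For $h \geq 6$, I would argue that $\overline{H} = \overline{C_h}$ contains an induced cycle of length at least 4: indeed $\overline{C_h}$ for $h \geq 6$ contains an induced $C_4$ (take two disjoint non-edges of $C_h$ that are ``crossing'', e.g.\ vertices $1,2,4,5$ around the cycle — in the complement $1\sim4$, $4\sim2$? no: need to be careful, but $\overline{C_h}$ is well known to contain induced $C_4$'s for $h\geq 6$). The subtle part is ensuring $v$ retains three neighbors on the chosen short cycle; the cleaner route is to observe that in $\overline{G}$, $v$ has at most two neighbors in $\overline{H}$ is impossible when $h\geq 5$ (since $v$ has at least $h-2$ of them), and then invoke that a vertex with $\geq 3$ neighbors on an antihole of length $\geq 5$, together with a suitably chosen induced subcycle, forms a wheel — equivalently, reduce to the fact that $\overline{H}$ plus a vertex with three neighbors in it always contains a wheel when $|V(\overline H)|\geq 5$.

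So the cleanest structure is: (1) if some vertex outside $H$ has $\geq 3$ neighbors in $H$, then $G$ contains a wheel (namely that vertex plus $H$), done. (2) Otherwise every vertex outside $H$ has $\leq 2$ neighbors in $H$; pick any such $v$ (exists since $G\neq H$). Then in $\overline G$, $v$ has $\geq |V(H)|-2 \geq 3$ neighbors in $V(H)$, which in $\overline G$ induces the antihole $\overline H$ of length $\geq 5$. Now it suffices to show: an antihole of length $\geq 5$ together with a vertex adjacent to at least three of its vertices contains a wheel. For length exactly $5$ this is immediate since $\overline{C_5}=C_5$. For length $\geq 6$, I would use that $\overline{C_h}$ contains, through any three prescribed vertices that $v$ is adjacent to, a suitable hole; concretely one shows that among the $\geq 3$ neighbors of $v$ on the antihole one can extract an induced cycle of length $\geq 4$ in $\overline H$ on which $v$ still has $\geq 3$ neighbors, possibly after adding one or two more antihole vertices.

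The main obstacle I anticipate is exactly this last bookkeeping in case (2) when $h \geq 6$: verifying that the antihole minus some vertices is still a hole and that $v$ keeps three neighbors on it. I expect the intended proof sidesteps most of this by noting that $\overline{C_h}$ for $h \geq 6$ already contains an induced $C_4$ and then choosing the three neighbors of $v$ carefully, or—more likely—by reducing to the observation immediately preceding the lemma (''the center of a wheel is a hub''/similar) combined with the fact that in $\overline G$ the vertex $v$ plus all of $\overline H$ need not be a wheel only if $\overline H$ is not a hole, i.e.\ only when $h = 4$, which is excluded. If $h\geq 6$ makes $\overline H$ not chordless, one drops to the $C_4$ inside it; since $v$ has $\geq h-2\geq 4$ neighbors in $\overline H$, a short pigeonhole shows at least three of them lie on a common induced $C_4$ of $\overline H$. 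I would present the $h=5$ case first as the clean base, then dispatch $h\geq 6$ with this pigeonhole argument, keeping the details brief since they are routine once the framework is in place.
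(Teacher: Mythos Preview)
Your overall strategy matches the paper's: pick $w\notin V(H)$, split on whether $w$ has at least three neighbours in $H$ (wheel in $G$) or at most two (wheel in $\overline G$), and treat $h=5$ via the self-complementarity of $C_5$. The part you flag as the ``main obstacle'' --- the case $h\ge 6$ --- is precisely where your write-up stays at the level of a sketch, and it is also where the paper is much crisper than your pigeonhole plan.

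The paper simply relabels $H=v_1v_2\cdots$ so that one of the (at most two) $G$-neighbours of $w$ in $H$ is $v_1$; then $w$ has at most one neighbour among $v_2,\dots,v_6$. Now look at $\{w,v_2,v_3,v_5,v_6\}$ in $\overline G$: among $v_2,v_3,v_5,v_6$ the only $C_h$-edges are $v_2v_3$ and $v_5v_6$, so in the complement these four vertices induce a $C_4$, and $w$ is adjacent (in $\overline G$) to at least three of them. That is the whole argument --- no need to inventory all induced $C_4$'s of $\overline{C_h}$ or to run a separate pigeonhole.

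One aside in your proposal is actually wrong and should be dropped: you write that ``$\overline H$ is not a hole \dots\ only when $h=4$''. In fact $\overline{C_h}$ is a hole only for $h=5$; for every $h\ge 6$ it has chords (e.g.\ $\overline{C_6}$ is the prism). So the route ``$v$ plus all of $\overline H$ is already a wheel'' is available only when $h=5$, which is exactly why a targeted $4$-vertex subset is needed for $h\ge 6$.
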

 
 \begin{proof} 
   Suppose that $G\neq H$ and let $w\notin V(H)$ be a vertex of $G$.
   If $H$ is of length~5, then it is self-complementary, so in $G$ or
   $\overline{G}$, $w$ has at least three neighbors in $H$.  If $H$ is
   of length at least~6, then let $H = v_1 \dots v_6 \dots$.  If $G$
   has no wheel, then $w$ has at most two neighbors in $H$.  Up to a
   relabelling, we may assume that $w$ has at most one neighbor among
   $v_2, \dots, v_6$. It follows that $\{w, v_2, v_3, v_5, v_6\}$
   induces a wheel in $\overline{G}$.
\end{proof}

\begin{lemma}
  \label{l:p3p3bar}
  Let $G$ be a graph.  None of $G, \overline{G}$ contains a path on
  three vertices if and only if   $G$ is a complete graph  or  an
  independent graph. 
\end{lemma}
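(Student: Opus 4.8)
The plan is to prove both directions of the biconditional in Lemma~\ref{l:p3p3bar} directly, relying on the structure of graphs that are $P_3$-free. First I would establish the easy direction: if $G$ is a complete graph, then $\overline{G}$ has no edges at all, so certainly $\overline{G}$ contains no $P_3$; and $G$ itself contains no $P_3$ because any three vertices of a complete graph induce a triangle, not a path. Symmetrically, if $G$ is an independent graph, then $G$ has no edges and $\overline{G}$ is complete, and the same argument applies with the roles of $G$ and $\overline{G}$ swapped. So in both cases neither $G$ nor $\overline{G}$ contains a $P_3$.

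For the forward direction, I would argue the contrapositive, or equivalently argue directly from the hypothesis that $G$ contains no $P_3$. A standard fact is that a graph is $P_3$-free if and only if every connected component is a clique; I would either cite this or give the two-line argument (if $G$ is $P_3$-free and $u,v$ lie in the same component with a shortest path of length $\geq 2$, the first three vertices of that path induce a $P_3$, a contradiction, so every component has diameter at most $1$, i.e.\ is a clique). So $G$ is a disjoint union of cliques. Now also $\overline{G}$ is $P_3$-free, so by the same reasoning $\overline{G}$ is a disjoint union of cliques, which means $G$ is a complete multipartite graph. A graph that is simultaneously a disjoint union of cliques and a complete multipartite graph must be either a single clique or edgeless: if $G$ had at least two components $K_s$ and $K_t$ (each with at least one vertex) and some component had at least two vertices, pick two vertices $x,y$ in that large component and a vertex $z$ in the other; then in the disjoint-union-of-cliques picture $x$ and $y$ are adjacent while neither is adjacent to $z$, but in a complete multipartite graph two vertices in the same part are nonadjacent and both are adjacent to everything outside the part, giving a contradiction. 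Hence either $G$ has one component (a single clique, so $G$ is complete) or every component is a single vertex (so $G$ is independent).

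I expect the main obstacle, such as it is, to be purely expository rather than mathematical: the lemma is elementary, and the only care needed is to invoke the ``$P_3$-free $\iff$ disjoint union of cliques'' characterization cleanly and to handle the degenerate cases ($G$ with one vertex, or $G$ empty) without fuss, since those satisfy both ``complete'' and ``independent'' simultaneously. I would phrase the conclusion as an inclusive ``or'' to avoid any worry about these boundary cases. No result from earlier in the paper is actually needed here; the lemma is self-contained.

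\begin{proof}
  If $G$ is complete, then no three vertices of $G$ induce a $P_3$
  (they induce a triangle), and $\overline{G}$ has no edge, so
  $\overline{G}$ contains no $P_3$.  By exchanging the roles of $G$
  and $\overline{G}$, the same holds if $G$ is independent.

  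Conversely, suppose that none of $G, \overline{G}$ contains a $P_3$.
  A graph with no induced $P_3$ has the property that every connected
  component is a clique: indeed, if some component contained two
  vertices at distance at least $2$, the first three vertices of a
  shortest path between them would induce a $P_3$.  Applying this to
  $G$, we get that $G$ is a disjoint union of cliques.  Applying it to
  $\overline{G}$, we get that $\overline{G}$ is a disjoint union of
  cliques, equivalently that $G$ is a complete multipartite graph.
  Suppose $G$ is neither complete nor independent.  Then $G$ has at
  least two components in the disjoint-union-of-cliques structure, and
  at least one of them, say $K$, has at least two vertices; pick
  $x, y \in K$ and $z$ in another component.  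Then $xy$ is an edge and
  $xz$ is not.  But in a complete multipartite graph, $xy \notin E(G)$
  implies $x$ and $y$ lie in the same part, and then $z$, being
  adjacent to $x$, would also be adjacent to $y$, so $y$ and $z$ lie
  in different parts; yet $z$ is in a different component of the
  clique decomposition from $x$ and $y$, so $yz \notin E(G)$, a
  contradiction.  Hence $G$ is complete or independent.
\end{proof}
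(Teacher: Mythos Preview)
Your overall approach is sound and slightly different from the paper's: you invoke the characterization ``$P_3$-free $\iff$ disjoint union of cliques'' on both $G$ and $\overline{G}$ and then argue that a graph which is simultaneously a disjoint union of cliques and complete multipartite must be complete or edgeless. The paper instead argues the contrapositive directly: assuming $G$ is neither complete nor independent, it takes an edge $vw$ in some nontrivial component and either finds a $P_3$ in $\overline{G}$ using a vertex from another component, or (if $G$ is connected) finds a $P_3$ in $G$ via a shortest path between two non-adjacent vertices. The paper's route is a touch shorter; yours is perfectly fine and arguably more structural.

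However, the final paragraph of your written proof is garbled. You set up $xy\in E(G)$ and $xz\notin E(G)$, but then write ``$xy\notin E(G)$ implies $x$ and $y$ lie in the same part'' (you mean $xz$, not $xy$) and ``$z$, being adjacent to $x$'' (but you just said $xz\notin E(G)$). The intended argument is: since $xz\notin E(G)$ and $G$ is complete multipartite, $x$ and $z$ lie in the same part; since $xy\in E(G)$, $y$ lies in a different part from $x$, hence from $z$; so $yz\in E(G)$. But $z$ lies in a different clique-component from $y$, so $yz\notin E(G)$, a contradiction. Alternatively, and more simply, you could just observe that $x,y,z$ with $xy\in E(G)$ and $xz,yz\notin E(G)$ give the $P_3$ $x\!-\!z\!-\!y$ in $\overline{G}$ directly, bypassing the multipartite detour altogether. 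Either way, the variable mix-ups need to be cleaned up before the proof reads correctly.
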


\begin{proof}
  Suppose that $G$ is neither independent nor complete.  Since $G$ is
  not independent, it has a connected component $C$ with at least two
  vertices.  So, $C$ contain an edge $vw$.  If $G$ has another
  connected component $C'$, then for some $u\in C'$, $vuw$ is a path
  on three vertices in $\overline{G}$.  So, we may assume that $C$ is
  the only connected component of $G$.  Since $C=G$ is not complete,
  it contains two non-adjacent vertices $v', w'$.  A shortest path from $v'$ to
  $w'$ in $G$ contains a path on three vertices.  We proved that $G$
  contains a path on three vertices or its complement.  The proof of
  the converse statement is clear.
\end{proof}

  \begin{figure}
    \begin{center}
    \includegraphics[height=2.5cm]{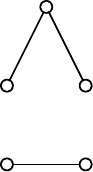}\rule{2cm}{0cm}
    \includegraphics[height=2.5cm]{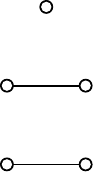}
  \end{center}
  \caption{The two complements of wheels on five vertices\label{fig:ComplementWheel5}}
\end{figure}

\begin{theorem}\label{Thm_AlgoPi43}
  The problem $\overline{\Pi}_{4, 3}$ can be solved in time $O(n^4)$.
\end{theorem}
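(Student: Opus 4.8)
The plan is to reduce the problem, in $O(n^4)$ time, to the detection of \emph{short wheels} (those whose rim is a $C_4$), and to solve the latter by scanning the non-adjacent pairs of vertices, with Lemma~\ref{l:p3p3bar} doing the local work.

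\emph{The reduction.} First I would run the algorithm of Theorem~\ref{Prop_NPAlgorithm} on $G$ and, after building $\overline G$ in $O(n^2)$ time, on $\overline G$; since $m,\overline m\le\binom n2$, each call costs $O(n+m^2)=O(n^4)$. In linear time one can also decide whether $G$ (resp.\ $\overline G$) is a hole, i.e.\ a connected $2$-regular graph on at least $4$ vertices. Now consider cases. If $G$ is not a hole yet contains a hole $H$ of length at least $5$, then $G\neq H$, so Lemma~\ref{l:h} gives a wheel in $G$ or in $\overline G$ and the answer is ``yes''; symmetrically with $\overline G$ in place of $G$. If $G$ \emph{is} a hole $C_t$, then $G$ has no wheel and the answer depends only on whether $\overline{C_t}$ has one, which I claim is the case exactly when $t\ge 7$: for $t=5$ we have $\overline{C_5}=C_5$ and for $t=6$ we have the prism, neither containing a wheel (two short finite checks), whereas for $t\ge 7$, labelling $C_t$ cyclically by $1,\dots,t$, the five vertices $\{1,2,4,5,6\}$ induce in $\overline{C_t}$ the chordless cycle $1$-$5$-$2$-$4$-$1$ together with the vertex $6$, which is adjacent to $1,2,4$, hence a wheel. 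The subcase ``$\overline G$ is a hole'' is symmetric. After these cases, we may assume that neither $G$ nor $\overline G$ is a hole and that neither has a hole of length at least $5$; then the rim of any wheel in $G$ or in $\overline G$ is a $C_4$.

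\emph{Detecting short wheels.} Here is the subroutine, to be applied to $G$ and to $\overline G$. For each pair $\{a,c\}$ of non-adjacent vertices, set $T=N(a)\cap N(c)$ and test whether $|T|\ge 3$ and $G[T]$ is neither complete nor edgeless. If the test succeeds, report a wheel: by Lemma~\ref{l:p3p3bar}, $G[T]$ or $\overline{G[T]}$ contains an induced $P_3$, which, for a suitable labelling, yields three distinct vertices $b,w,d\in T$ with $w\sim b$ and $b\not\sim d$ in $G$; then $\{a,b,c,d,w\}$ induces a wheel with rim $a$-$b$-$c$-$d$-$a$ (chordless since $a\not\sim c$ and $b\not\sim d$) and center $w$, which is adjacent to $a$, $b$ and $c$. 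Conversely, if $G$ contains a short wheel with rim $p$-$q$-$r$-$s$-$p$ and center $v$, then $v$ is adjacent to both endpoints of one of the two diagonals of the rim, say to $p$ and $r$, and — having at least three neighbours on the rim — to at least one of $q,s$, say $q$; hence, when $\{p,r\}$ is processed, $q,s,v\in T$ (so $|T|\ge 3$) while $G[T]$ contains the edge $vq$ and the non-edge $qs$, so the test succeeds. There are $O(n^2)$ pairs, and for each the computation of $T$ and the inspection of $G[T]$ via the adjacency matrix take $O(n^2)$ time; total $O(n^4)$.

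\emph{Assembly.} The algorithm answers ``yes'' if and only if one of the subroutines exhibits a wheel, and runs in $O(n^4)$. A ``yes'' answer is correct because a concrete wheel (in $G$, in $\overline G$, or guaranteed by Lemma~\ref{l:h}) is produced in each positive case. A ``no'' answer is correct because then $G$ is not a hole, has no hole of length at least $5$ and no short wheel, so — the rim of a wheel being a hole of length at least $4$ — $G$ has no wheel, and likewise $\overline G$. The step I expect to require the most care is the correctness of the short-wheel subroutine: one must check that scanning the non-adjacent pairs $\{a,c\}$ and examining only $G[N(a)\cap N(c)]$ catches every $C_4$-rimmed wheel, whether its center is adjacent to exactly three or to all four rim vertices, and that the per-pair work is genuinely $O(n^2)$. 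The reduction via Theorem~\ref{Prop_NPAlgorithm} and Lemma~\ref{l:h}, and the complement-of-a-cycle edge case, are then routine; I would state the short-wheel subroutine as a separate lemma.
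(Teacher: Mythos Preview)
Your proof is correct and follows essentially the same approach as the paper: reduce via Theorem~\ref{Prop_NPAlgorithm} and Lemma~\ref{l:h} to the detection of five-vertex wheels, then handle those by a local test justified by Lemma~\ref{l:p3p3bar}. The only differences are cosmetic: you spell out the edge case ``$G$ is itself a hole'' that the paper leaves to ``decide easily'', and your short-wheel subroutine (scan non-edges $\{a,c\}$ and inspect $G[N(a)\cap N(c)]$) is exactly the complementary formulation of the paper's (scan edges $vw$ and inspect the common non-neighbours), so running either routine on both $G$ and $\overline G$ yields the same algorithm.
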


\begin{proof}
  The first step of the algorithm  is to check whether $G$ or
$\overline{G}$ contains a hole of length at least~5.  This can be
implemented in time $O(n^4)$ by Theorem~\ref{Prop_NPAlgorithm}, and if
a hole is found, Lemma~\ref{l:h} allows to decide easily whether $G$ or
$\overline{G}$ contains a wheel.  So, we may assume from here on that
none of $G$, $\overline{G}$ contains a hole of length at least~5.
Hence, we just need to detect a wheel on five vertices.  For
convenience, we show how to detect the complement of a wheel (and we run
this routine in the graph and in its complement).

Complement of wheels on five vertices are represented in
Fig.~\ref{fig:ComplementWheel5}.  One is the disjoint union of an edge
and a path on three vertices, the other one is the disjoint union of
an edge and the complement of a path on three vertices.  To decide
whether a graph contains the complement of a wheel on five vertices,
it is therefore enough to check all edges $vw$ of $G$, and to decide
for each of them whether the set $S$ of vertices of $G$ adjacent to
none of $v, w$ contains a path on three vertices or its complement.
By Lemma~\ref{l:p3p3bar}, testing the desired property in $S$ is easy
to implement in time $O(n^2)$.  Hence, the algorithm can be
implemented to run in time $O(n^4)$.
\end{proof}   

Theorem~\ref{Thm_AlgoPi43} and its proof suggest that graphs with no
wheels and no complement of wheels form a restricted class that might
have a simple structure.  The class contains all split graphs,
complete bipartite graphs, some (non-induced) subgraphs of them, and
several particular graphs such as $C_5$, $C_6$ or $P_5$.  We could not
ellucidate its structure, and leave this as an open question.


\begin{thebibliography}{10}
\bibitem{aboulkerRTV:propeller} P.~Aboulker, M.~Radovanovi{\'c},
  N.~Trotignon, and K.~Vu{\v s}kovi{\'c}.  \newblock Graphs that do
  not contain a cycle with a node that has at least two neighbors on
  it.  \newblock {\em SIAM Journal on Discrete Mathematics},
  26(4):1510--1531, 2012.

\bibitem{berge:61} C.~Berge.  \newblock F{\"a}rbung von {G}raphen,
  deren s{\"a}mtliche bzw.~deren ungerade {K}reise starr sind.
  \newblock Technical report, Wissenschaftliche Zeitschrift der
  Martin-Luther-Universit{\"a}t Halle-Wittenberg,
  Mathematisch-Naturwissenschaftliche Reihe 10, 1961.

\bibitem{bienstock:evenpair} D.~Bienstock.  \newblock On the
  complexity of testing for odd holes and induced odd paths.
  \newblock {\em Discrete Mathematics}, 90:85--92, 1991.  \newblock
  See also Corrigendum by B. Reed, {\it Discrete Mathematics}, 102,
  (1992), p. 109.

\bibitem{chudnovsky.c.l.s.v:reco} M.~Chudnovsky, G.~Cornu{\'e}jols,
  X.~Liu, P.~Seymour, and K.~Vu{\v s}kovi{\'c}.  \newblock Recognizing
  {B}erge graphs.  \newblock {\em Combinatorica}, 25:143--186, 2005.

\bibitem{Chudnovsky.Ka:08} M.~Chudnovsky and R.~Kapadia.  \newblock
  Detecting a theta or a prism.  \newblock {\em SIAM Journal on
    Discrete Mathematics}, 22(3):1164--1186, 2008.

\bibitem{chudnovsky.r.s.t:spgt} M.~Chudnovsky, N.~Robertson,
  P.~Seymour, and R.~Thomas.  \newblock The strong perfect graph
  theorem.  \newblock {\em Annals of Mathematics}, 164(1):51--229,
  2006.

\bibitem{chudnovsky.seymour:even} M.~Chudnovsky and P.~Seymour.
  \newblock Even pairs in {B}erge graphs.  \newblock {\em Journal of
    Combinatorial Theory, Series B}, 99:370--377, 2009.

\bibitem{chudnovsky.seymour:theta} M.~Chudnovsky and P.D. Seymour.
  \newblock The three-in-a-tree problem.  \newblock {\em
    Combinatorica}, 30(4):387--417, 2010.



\bibitem{confortiCKV97} M.~Conforti, G.~Cornu{\'e}jols, A.~Kapoor, and
  K.~Vu{\v s}kovi{\' c}.  \newblock Universally signable graphs.
  \newblock {\em Combinatorica}, 17(1):67--77, 1997.


\bibitem{maffray.t:reco} F.~Maffray and N.~Trotignon.  \newblock
  Algorithms for perfectly contractile graphs.  \newblock {\em SIAM
    Journal on Discrete Mathematics}, 19(3):553--574, 2005.

\bibitem{maffray.t.v:3pcsquare} F.~Maffray, N.~Trotignon, and K.~Vu{\v
    s}kovi{\'c}.  \newblock Algorithms for square-{$3PC(\cdot,
    \cdot)$}-free {B}erge graphs.  \newblock {\em SIAM Journal on
    Discrete Mathematics}, 22(1):51--71, 2008.

\bibitem{nikolopoulos.s} S.D.~Nikolopoulos, and L.Palios \newblock
  Detecting Holes and Antiholes in Graphs. \newblock {\em
    Algorithmica}, 42(2):119--138, 2007.

\bibitem{tarjan:clique} R.E. Tarjan.  \newblock Decomposition by
  clique separators.  \newblock {\em Discrete Mathematics},
  55(2):221--232, 1985.

\bibitem{truemper} K.~Truemper.  \newblock Alpha-balanced graphs and
  matrices and {GF}(3)-representability of matroids.  \newblock {\em
    Journal of Combinatorial Theory, Series B}, 32:112--139, 1982.

\bibitem{vuskovic:truemper} K.~Vu{\v s}kovi{\'c}.  \newblock The world
  of hereditary graph classes viewed through {T}ruemper
  configurations.  \newblock To appear in Surveys in Combinatorics,
  London Mathematical Society Lecture Note Series, 2012.


\end{thebibliography}

\section*{Affiliation and grants}

\begin{itemize}
\item Emilie Diot, S\'ebastien Tavenas and Nicolas Trotignon: CNRS, ENS de
  Lyon, LIP, \'Equipe MC2, INRIA, Universit\'e Lyon 1, Universit\'e de
  Lyon. 
\item This work was supported by the LABEX MILYON (ANR-10-LABX-0070)
  of Universit\'e de Lyon, within the program "Investissements d'Avenir"
  (ANR-11-IDEX-0007) operated by the French National Research Agency
  (ANR). 
\item Nicolas Trotignon is partially supported by the French Agence Nationale de la
      Recherche under reference \textsc{anr-10-jcjc-Heredia} and
      \textsc{anr-14-blan-stint.}   

\end{itemize}

\end{document}